\newcommand{\bea}{\begin{eqnarray}}
\newcommand{\eea}{\end{eqnarray}}
\def\bi{\begin{itemize}}
\def\ei{\end{itemize}}
\def\bc{\begin{center}}
\def\ec{\end{center}}
\def\C{\hbox{$\mit I$\kern-.7em$\mit C$}}
\def\R{\hbox{$\mit I$\kern-.6em$\mit R$}}
\def\N{\hbox{$\mit I$\kern-.6em$\mit N$}}
\newtheorem{theorem}{Theorem}
\newtheorem{lemma}[theorem]{Lemma}
\newtheorem{observation}[theorem]{Observation}
\begin{document}
\title{Signaling between time steps does not allow for nonlocality beyond hidden nonlocality} 

\author{Cornelia Spee}

\affiliation{Institute for Quantum Optics and Quantum Information (IQOQI),
Austrian Academy of Sciences, Boltzmanngasse 3, 1090 Vienna, Austria}
\affiliation{Institute for Theoretical Physics, University of Innsbruck, Technikerstr. 21A, 6020 Innsbruck, Austria}

\begin{abstract}  
Hidden nonlocality is the phenomenon that entangled states can be local in the standard Bell scenario but display nonlocality after local filtering. However, there exist entangled states for which all measurement statistics can be described via a local hidden variable model even after local filtering. In this work we consider the scenario that measurement outcomes and settings of Alice can influence measurements of Bob in subsequent time steps (and vice versa), however, there is no signaling among them for measurements at the same time step.  We show that in this scenario states that only display local statistics after local filtering remain local even when considering the complete statistics of arbitrary sequences and therefore no advantage can be gained by performing longer sequences in this scenario.
We first determine the extreme points of the polytope defined by the no-signaling conditions within the same time step and the arrow-of-time constraints. Based on these results we introduce a notion of locality and provide a complete representation of the corresponding local polytope in terms of inequalities in the simplest scenario. These results imply that in the scenario considered here there is no nonlocality beyond hidden nonlocality. We further propose a device-dependent Schmidt number witness and we compare our finding to known local models in the sequential scenario.\end{abstract}
\maketitle

{\it Introduction.---}
Initially introduced as a quantitative criteria for ruling out local hidden variable models \cite{Bell} Bell inequalities  (or more generally non-local correlations) 
serve nowadays also as the basis for applications such as randomness generation \cite{randomness,randomnessgen}, quantum communication protocols \cite{E91,DIQKD} and self-testing \cite{selftesting0,selftesting}. In particular, their device-independence, i.e., no assumptions are made on the functionality of the preparation and measurement apparatuses apart from no-signaling among the measurement devices of the two parties, makes them a useful tool.
Motivated by their significance for our understanding of quantum mechanics, no-signaling correlations have been widely studied \cite{rev}. In general there are three sets of correlations to distinguish. The local correlations, which are a subset of the other two and which form the local polytope, can be explained by local hidden variable models. The quantum set, which is notoriously difficult to characterize, consists of all no-signaling correlations that can be realized within quantum mechanics.  
Finally, the no-signaling polytope, which is  only constrained by the no-signaling conditions (and the conditions that any probability distribution has to obey, i.e., positivity and normalization)  is a strict superset of the quantum set. In particular, the well-known PR boxes \cite{PR,PR2,PR3}, which are extreme points of the polytope, cannot be realized within quantum mechanics. Inequalities distinguishing the local polytope from the quantum set (Bell inequalities) are highly desirable and have been found for many different scenarios \cite{rev}, i.e., number of inputs and outputs. In particular, for simple scenarios like two measurement settings with each two outcomes, all facet inequalities of the local polytope are known \cite{allfacet,allfacets}. These correspond up to relabelling to the famous CHSH inequality \cite{CHSH} or the trivial constraints (i.e., positivity of  probabilities).  
In order to violate a Bell inequality the underlying quantum state has to be entangled. This allows one to use Bell inequalities in order to detect entanglement in a device-independent way. However, the converse is not true.There exist entangled states that do not violate any Bell inequality \cite{Werner,forPOVM}, i.e., measurements always  result  in local correlations and therefore entanglement and Bell nonlocality are two distinct features.

It has been noticed that via local filtering, i.e., by performing local measurements and post-processing on a single outcome, the Bell nonlocality of some states can be activated \cite{hiddennonlocal0,hiddennonlocal}. That is, for a single measurement round (the standard Bell scenario), the states only display local behavior, whereas for two measurement rounds, the statistics of a post-measurement state cannot be explained via local hidden variable models. This phenomenon was termed hidden nonlocality and has raised interest in nonlocality in the sequential scenario. A few results have been obtained in this scenario.

It has been shown that if one party performs sequential measurements, often equivalently phrased as multiple Bobs which hand over their system, and the other party, Alice, performs measurements at a single time step, the statistic among Alice and one of the Bobs can violate a Bell inequality \cite{multipleB0,multipleB}. Allowing both parties to perform sequential measurements, the question regarding what correlations should been considered local has been addressed and specific models generalizing the notion of locality to the sequential scenario have been proposed which respect no-signaling among the parties throughout the entire sequence \cite{Gallego2014}. Further, the NPA hierarchy \cite{NPA}, a numerical approach to approximate the quantum set, has been adapted to the sequential setting  with no-signaling \cite{seqNPA} and  it has been noticed that the sequential scenario provides an advantage in the task of randomness certification \cite{seqrand,seqNPA}. Sequential correlations may mimick the presence of higher-dimensional entanglement and witnesses have been derived in order to detect genuine multi-level entanglement \cite{multilevel} and certify the irreducible entanglement dimension  \cite{dim_test_multilevel}.

However, one of the main questions --whether entanglement and nonlocality are equivalent in the sequential scenario-- has not yet been answered. Whereas local filtering reveals the nonlocality of some entangled states, there exist others which neither show nonlocality nor hidden nonlocality despite being entangled \cite{state}. It is not clear whether this remains true if the entire statistics and more than two time steps are considered. 

In this work we shed light on the role of signaling when addressing this question. More precisely, we show that  if signaling between time steps  is allowed there cannot be any advantage beyond hidden nonlocality even if longer sequences are considered and hence entanglement and nonlocality remain distinct features in this scenario. 
This paper is structured as follows. We first introduce the scenario. Then we characterize the spatio-temporal polytope, i.e. the no-signaling polytope in this scenario, by identifying its extreme point. We then provide a notion of locality and  a set of inequalities determining the local polytope for two inputs and two outputs. We further consider a concatenation of these inequalities which results in a (device-dependent)  Schmidt number witness with a considerably larger gap between qubits and ququarts compared to previously known ones. We finally discuss the implications of our results and compare our model to the ones proposed in \cite{Gallego2014} where no-signaling is assumed throughout the sequence.

{\it The scenario.---}We will investigate the correlations arising from sequences of local measurements performed on bipartite or multipartite systems. Considering first the simplest scenario of two measurements steps and two parties, we will be concerned with $ p(a_1a_2b_1b_2\vert x_1x_2y_1y_2)$ which is the probability for obtaining on Alice's  side  in the first time step outcome $a_1$ when measurement $x_1$ is performed  and  in the second time step outcome $a_2$ when measurement $x_2$ is applied and on Bob's  side first $b_1$ and then $b_2$ for the measurement sequence $y_1y_2$.
\begin{figure}[h]
\begin{center}
\includegraphics[width=0.5\columnwidth]{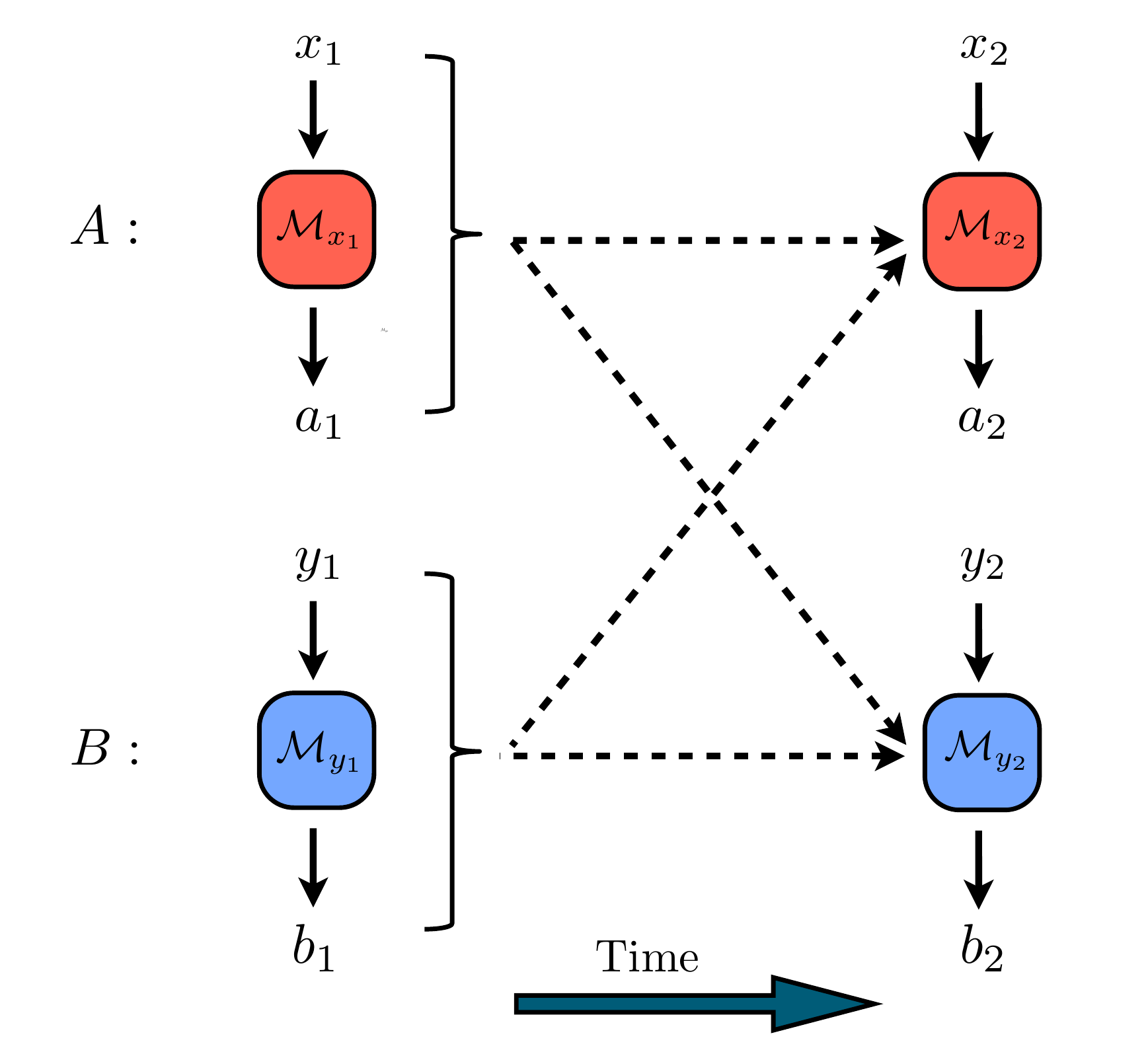} 
\end{center}
\caption{Schematic representation of the scenario considered here. A bipartite quantum state $\rho_{AB}$ is subjected to a  sequence of local measurements $\mathcal{M}_i$. The choices of the measurements for Alice (Bob) are labelled 
by $x_1,x_2$ ($y_1,y_2$) and the outcomes by $a_1,a_2$ ($y_1,y_2$) respectively. The first measurements (choice of setting and outcomes) can influence the second measurements as indicated by the dashed arrows.}
\label{fig:temporal3}
\end{figure}

Due to the sequential character of the measurements the correlations have to  fulfill locally the arrow of time (AoT) constraints \cite{AoT}  (i.e., no-signaling from the future to the past). Moreover, we will assume  the no-signaling condition \cite{PR, Tsirelsonbound} for measurements performed at the same time step. That is the following constraints (as well as the analogous constraints for the other party) have to hold
\begin{align}
\nonumber\sum_{a_2} p(a_1a_2b_1b_2\vert x_1x_2y_1y_2) &= \sum_{a_2} p(a_1a_2b_1b_2\vert x_1x_2'y_1y_2),\\\nonumber
\sum_{a_2,b_2} p(a_1a_2b_1b_2\vert x_1x_2y_1y_2)&= \sum_{a_2, b_2} p(a_1a_2b_1b_2\vert x_1x_2'y_1y_2'),\\
\label{cond}\sum_{a_1,a_2,b_2} p(a_1a_2b_1b_2\vert x_1x_2y_1y_2) &= \sum_{a_1,a_2,b_2} p(a_1a_2b_1b_2\vert x_1'x_2'y_1y_2'),
\end{align}
for all inputs and outputs which do not appear in the summation. 
Note that these constraints define a polytope which can be straightforwardly generalized to longer sequences and more parties and which will be called  the spatio-temporal correlation polytope, denoted by $P_{N,L}^{O,S}$ with $N$ the number of parties, $L$ the length of the sequence, $S$ the number of measurement settings and $O$ the number of outcomes per setting. Note further that the first measurement of Alice can influence the second measurement of Bob (and vice versa) (see Fig. \ref{fig:temporal3}).  This is in contrast to previous works on non-locality in sequential scenarios (for more details see below). The constraints in Eq. (\ref{cond}) have also been considered in the context of quantum key distribution \cite{NS_seq1,NS_seq2} and the randomness of noisy PR correlations \cite{NS_seq3}.

The assumption that signaling is allowed from the past to the future (even to the other party) becomes relevant if, for example, the time between the measurements is too large  such that information can travel from one of the parties to the other within this time window (e.g. if the parties are close, as for example in ion traps). 

{\it The polytopes.---}
In the following we characterize the spatio-temporal correlation polytope in the simplest scenario $P_{2,2}^{2,2}$ by providing its extremal points and a set of inequalities completely describing the corresponding local polytope. Moreover, we investigate its relation to quantum mechanics. Note that all results can be straightforwardly generalized to longer sequences and more parties.

The spatio-temporal polytope  $P_{2,2}^{2,2}$ can be characterized as given in the following Lemma.
\begin{lemma}\label{lemma_polytope}
A correlation is an extreme point of the polytope $P_{2,2}^{2,2}$ if and only if it can be written of the form
\begin{equation}\label{EQextr}
p(a_1a_2b_1b_2\vert x_1x_2y_1y_2)=p_e(a_1b_1|x_1 y_1)p_{e_X}(a_2b_2|X;x_2 y_2)
\end{equation}
 where $X=a_1b_1x_1y_1$, $p_e(a_1b_1|x_1 y_1)$ is an extreme point of the no-signaling polytope and for fixed  $X$ $p_{e_X}(X;x_2 y_2)$ corresponds to an extreme point of the no-signaling polytope (but can be different for different $X$).
\end{lemma}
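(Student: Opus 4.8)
The plan is to characterise $P_{2,2}^{2,2}$ as the set of \emph{all} correlations of the product form~(\ref{EQextr}) in which $p_e$ ranges over the whole bipartite two‑input two‑output no‑signaling polytope, call it $\mathrm{NS}$ (not only over its extreme points), and each $p_{e_X}(a_2b_2|X;x_2y_2)$ ranges independently over $\mathrm{NS}$. The assertion about extreme points then follows because $\mathrm{NS}$ is a polytope and because passing to the first‑time‑step marginal and to the conditional on $X$ are affine operations on the relevant faces of $P_{2,2}^{2,2}$.

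For the containment ``$P_{2,2}^{2,2}$ $\subseteq$ products'', I would take $p\in P_{2,2}^{2,2}$ and define its first‑step marginal $q(a_1b_1|x_1y_1):=\sum_{a_2b_2}p(a_1a_2b_1b_2|x_1x_2y_1y_2)$. The second line of~(\ref{cond}) makes this well defined (independent of $x_2,y_2$), and the third line of~(\ref{cond}) together with its Bob analogue give that $\sum_{a_1}q$ is independent of $x_1$ and $\sum_{b_1}q$ is independent of $y_1$, so $q\in\mathrm{NS}$. For each $X=a_1b_1x_1y_1$ with $q(a_1b_1|x_1y_1)>0$ set $p_{e_X}(a_2b_2|X;x_2y_2):=p(a_1a_2b_1b_2|x_1x_2y_1y_2)/q(a_1b_1|x_1y_1)$; normalisation is automatic, and the first line of~(\ref{cond}) and its Bob analogue make $\sum_{a_2}p_{e_X}$ independent of $x_2$ and $\sum_{b_2}p_{e_X}$ independent of $y_2$, so $p_{e_X}\in\mathrm{NS}$. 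When $q(a_1b_1|x_1y_1)=0$, positivity of $p$ forces $p(a_1a_2b_1b_2|\cdots)=0$ for all $a_2,b_2,x_2,y_2$, so one may place any $p_{e_X}\in\mathrm{NS}$ there; in every case~(\ref{EQextr}) holds. The reverse containment ``products $\subseteq P_{2,2}^{2,2}$'' is a direct check: for a product, $\sum_{a_2}p=q\sum_{a_2}p_{e_X}$ is independent of $x_2$, $\sum_{a_2b_2}p=q$ is independent of $x_2,y_2$, $\sum_{a_1a_2b_2}p=\sum_{a_1}q$ is independent of $x_1$, and likewise for the Bob analogues, so all constraints of~(\ref{cond}) hold.

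Given this description I would establish the two implications of Lemma~\ref{lemma_polytope}. If $p_e$ and all the $p_{e_X}$ in~(\ref{EQextr}) are extreme in $\mathrm{NS}$ and $p=\lambda p^{(1)}+(1-\lambda)p^{(2)}$ with $p^{(i)}\in P_{2,2}^{2,2}$ and $\lambda\in(0,1)$: taking first‑step marginals (a linear operation) gives $p_e=\lambda q^{(1)}+(1-\lambda)q^{(2)}$, hence $q^{(1)}=q^{(2)}=p_e$; then for every $X$ with $p_e(X)>0$, conditioning on $X$ — affine on the face where all three correlations live since they share the marginal $p_e$ — gives $p_{e_X}=\lambda r_X^{(1)}+(1-\lambda)r_X^{(2)}$ with $r_X^{(i)}\in\mathrm{NS}$ the conditional of $p^{(i)}$, so $r_X^{(i)}=p_{e_X}$; on the blocks with $p_e(X)=0$ both $p^{(i)}$ vanish; hence $p^{(1)}=p^{(2)}=p$ and $p$ is extreme. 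Conversely, if $p\in P_{2,2}^{2,2}$ is extreme, write it in the form~(\ref{EQextr}) by the above construction: a nontrivial splitting $p_e=\lambda q^{(1)}+(1-\lambda)q^{(2)}$ in $\mathrm{NS}$ would make the two correlations obtained by substituting $q^{(1)}$ resp.\ $q^{(2)}$ for $p_e$ in~(\ref{EQextr}) lie in $P_{2,2}^{2,2}$, be distinct (different first‑step marginals) and average with weights $\lambda,1-\lambda$ to $p$, a contradiction; similarly a nontrivial splitting of some $p_{e_{X_0}}$ with $p_e(X_0)>0$ gives, upon substituting it alone, a nontrivial decomposition of $p$. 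The factors $p_{e_X}$ with $p_e(X)=0$ can be chosen extreme without changing $p$.

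The step I expect to require the most care is the product description: one must use precisely the no‑signaling and arrow‑of‑time conditions in~(\ref{cond}) and \emph{no others} — in particular there is deliberately no constraint forbidding a past setting of one party from influencing the other party's future, which is exactly why $p_{e_X}$ may depend on the full first‑round data $X=a_1b_1x_1y_1$ — and one must treat the degenerate blocks $q(a_1b_1|x_1y_1)=0$ so that the ``only if'' direction yields genuinely extreme factors. The extreme‑point bookkeeping is then routine, and the same marginal/conditional construction, iterated over time steps, extends the statement to longer sequences and more parties.
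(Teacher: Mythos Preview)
Your proof is correct and follows essentially the same route as the paper's: both rest on the factorisation of any element of $P_{2,2}^{2,2}$ into a first-step no-signaling marginal times a family of no-signaling conditionals indexed by $X$, and both reduce extremality in $P_{2,2}^{2,2}$ to extremality of each factor in $\mathrm{NS}$. The only cosmetic differences are that you derive the factorisation directly from the constraints~(\ref{cond}) rather than invoking the AoT result of~\cite{TCqubit}, and you prove the ``only if'' direction by splitting a single non-extreme factor to contradict extremality of $p$, whereas the paper expands both factors over extreme points of $\mathrm{NS}$ to exhibit an explicit convex decomposition.
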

\begin{proof} Note first that it has been shown that any correlation that fulfills the AoT constraints has to factorize \cite{TCqubit}, i.e. we have that 
\begin{equation}\label{AoT}
p(a_1a_2b_1b_2\vert x_1x_2y_1y_2)=p(a_1b_1|x_1 y_1)p(a_2b_2|X;x_2 y_2)
\end{equation}
 with $p(a_1b_1|x_1 y_1)=\sum_{a_2, b_2}p(a_1a_2b_1b_2\vert x_1x_2y_1y_2)$ and $p(a_2b_2|X;x_2 y_2)=p(a_1a_2b_1b_2\vert x_1x_2y_1y_2)/p(a_1b_1|x_1 y_1)$ for $p(a_1b_1|x_1 y_1)\neq 0$ (otherwise it can be chosen arbitarily). Due to the constraints in Eqs. (\ref{cond}) each factor [$p(a_1b_1|x_1 y_1)$ and $p(a_2b_2|X;x_2 y_2)$ for fixed $X$] has to fulfill the no-signaling conditions.  Note then that if  a correlation is not extremal, i.e, $p(a_1a_2b_1b_2\vert x_1x_2y_1y_2)=\sum q_{i}p_{e_i}(a_1a_2b_1b_2\vert x_1x_2y_1y_2)$, then either $p(a_1b_1|x_1 y_1)=\sum_{a_2, b_2}p(a_1a_2b_1b_2\vert x_1x_2y_1y_2)=\sum q_{i}p_{e_i}(a_1b_1\vert x_1y_1)$ or for all $q_i\neq 0$ it holds that $p_{e_i}(a_1b_1\vert x_1y_1)= p_{e}(a_1b_1\vert x_1y_1)$. However, latter implies that 
\bea \nonumber p(a_2b_2|X;x_2 y_2)&=\sum q_{i}\frac{p_{e_i}(a_1a_2b_1b_2\vert x_1x_2y_1y_2)}{p_{e}(a_1b_1\vert x_1y_1)}\\&=\sum q_{i}p_{e_i}(a_2b_2|X;x_2 y_2)\eea 
and therefore  $ p(a_1a_2b_1b_2\vert x_1x_2y_1y_2)= p_{e}(a_1b_1\vert x_1y_1) \sum q_{i}p_{e_i}(a_2b_2|X;x_2 y_2).$ Hence, we have that  any correlation that is not an extreme point is not of the form given in Eq. (\ref{EQextr}). 

It remains to observe that correlations which are of the form in Eq. (\ref{EQextr}) are the only extremal ones or equivalently that any correlation of the form $p(a_1a_2b_1b_2\vert x_1x_2y_1y_2)=p(a_1b_1|x_1 y_1)p(a_2b_2|X;x_2 y_2)$ where each factor obeys the no-signaling constraints can be achieved by some convex combination of correlations of the form given in Eq.  (\ref{EQextr}). Note first that as the factors in Eq. (\ref{AoT})  correspond to no-signaling correlations they can be represented by $p(a_1b_1|x_1 y_1)=\sum q_{i}p_{e_i}(a_1b_1\vert x_1y_1)$  and $ p(a_2b_2|X;x_2 y_2)= \sum \tilde{q}_{i,X}p_{\tilde{e}_{i,X}}(a_2b_2|X;x_2 y_2)$ where $p_{e_i}(a_1b_1\vert x_1y_1)$ and $p_{\tilde{e}_{i,X}}(a_2b_2|X;x_2 y_2)$ are extreme points of the no-signaling polytope and $\{q_i\}_i$ and $\{\tilde{q}_{i,X}\}_i$ are probability distributions. Using then Eq. (\ref{AoT}) one obtains that any correlation in $P_{2,2}^{2,2}$ is of the form $p(a_1a_2b_1b_2\vert x_1x_2y_1y_2)=\sum q_{i}\tilde{q}_{i,X}p_{e_i}(a_1b_1\vert x_1y_1)p_{\tilde{e}_{i,X}}(a_2b_2|X;x_2 y_2)$ which is a convex combination of the extreme points in Eq. (\ref{EQextr}) and therefore proves the statement.
\end{proof}
Then using the results known for the non-sequential scenario \cite{rev}, one obtains that there are in total $16\times24^4+8 \times 24^8$ extreme points\footnote{This is due to the fact that (see e.g. \cite{rev})  16 extreme points of the non-sequential no-signaling polytope correspond to deterministic assignments (with non-zero $p_e(a_1b_1|x_1y_1)$ for 4 different choices of $X$) and 8 are up to some relabeling equivalent to a PR box (and therefore $p_e(a_1b_1|x_1y_1)$ is non-zero for 8 different choices of $X$). For each $X$  for which $p_e(a_1b_1|x_1y_1)$ is non-zero we can choose a different extreme point $p_{e_X}(a_2b_2|X;x_2y_2)$ and there are 24 different choices for $p_{e_X}(a_2b_2|X;x_2y_2)$, which shows that there are  $16\times24^4+8 \times 24^8$ different extreme points of the polytope $P_{2,2}^{2,2}$.}. Any extreme point for which one of the factors in Eq. (\ref{EQextr})  is a PR box can not be reached within quantum mechanics, as this would require a quantum state and measurements that realize this correlation at the specific point in the measurement history, which do not exist \cite{PR}. Any that is build solely from extreme points of the non-sequential local polytope can  be reached (without the need of entanglement) if one allows the  preparation of an arbitrary separable state at each time step which can depend on the setting and outcomes of previous time steps. We will call in the following the polytope obtained by such extreme points the sequential local polytope, $Q_{N,L}^{O,S}$. 

This definition of the local polytope in our scenario is motivated as follows. First, one would like to recover the well-known notion of (non-sequential) locality in the first time step. Second, given complete information about previous measurement,  the correlations observed in a later time step should be local in the standard (non-sequential) sense. Third, the correlations within this polytope can be obtained from initial separable states and classical shared randomness among the parties and local state preparation (taking into account signaling between  time steps).

The sequential local polytope $Q_{2,2}^{2,2}$ can be characterized by a set of (non-linear) inequalities.
\begin{observation}\label{ineqLocal}
The local polytope $Q_{2,2}^{2,2}$ is determined by the CHSH inequalities (with all possible relabelings) for $ p(a_1b_1\vert x_1y_1)$ and  for $p(a_2b_2|X;x_2 y_2)$ for any $X$, and the trivial constraints (positivity and normalization).
\end{observation}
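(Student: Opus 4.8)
The plan is to give a purely structural description of $Q_{2,2}^{2,2}$ and only at the end invoke the known facet list of the non-sequential local polytope. Recall that $Q_{2,2}^{2,2}$ is the convex hull of those extreme points of $P_{2,2}^{2,2}$ in Lemma~\ref{lemma_polytope} for which \emph{both} factors $p_e(a_1b_1|x_1y_1)$ and $p_{e_X}(a_2b_2|X;x_2y_2)$ are extreme points of the non-sequential \emph{local} polytope, i.e.\ deterministic. I would show that, among the elements of $P_{2,2}^{2,2}$ (all of which factorize as in Eq.~(\ref{AoT}) by the argument recalled in the proof of Lemma~\ref{lemma_polytope}), membership in $Q_{2,2}^{2,2}$ is equivalent to: the first factor $p(a_1b_1|x_1y_1)$ lies in the non-sequential local polytope, and for every $X$ with $p(a_1b_1|x_1y_1)\neq 0$ the conditional $p(a_2b_2|X;x_2y_2)$ lies in the non-sequential local polytope. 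Granting this, the statement follows by substituting the fact that for two parties, two settings and two outcomes the local polytope is precisely the set of distributions obeying positivity, normalization and all relabelings of CHSH \cite{allfacet,allfacets}.

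The ``only if'' direction is immediate: if $p=\sum_m\lambda_m E_m$ with $E_m=d_m(a_1b_1|x_1y_1)\,e_{m,X}(a_2b_2|x_2y_2)$ and both factors deterministic, then $p(a_1b_1|x_1y_1)=\sum_m\lambda_m d_m(a_1b_1|x_1y_1)$ is a convex mixture of deterministic points, and for fixed $X$ with nonzero marginal $p(a_2b_2|X;x_2y_2)=\sum_m \big(\lambda_m d_m(a_1b_1|x_1y_1)/p(a_1b_1|x_1y_1)\big)\,e_{m,X}(a_2b_2|x_2y_2)$ is as well; both are therefore local. (When the marginal vanishes the conditional is unconstrained and can be taken local, in line with the convention used already in Lemma~\ref{lemma_polytope}.)

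The content is in the ``if'' direction. Decompose $p(a_1b_1|x_1y_1)=\sum_k q_k d_k(a_1b_1|x_1y_1)$ into deterministic points and, for each $X$, $p(a_2b_2|X;x_2y_2)=\sum_l \mu_l^X e_l^X(a_2b_2|x_2y_2)$ likewise. These cannot be multiplied term by term because the second decomposition depends on $X=a_1b_1x_1y_1$. Instead, fix $k$: the response $d_k$ is nonzero on exactly the four values $X_1^k,\dots,X_4^k$ of $X$ compatible with it, and for every tuple $\vec l=(l_1,\dots,l_4)$ I define the extreme point $E_{k,\vec l}$ of $Q_{2,2}^{2,2}$ whose first factor is $d_k$ and whose second factor at $X_i^k$ is $e^{X_i^k}_{l_i}$ (arbitrary local choice on the other four values of $X$), weighted by $q_k\prod_{i=1}^4\mu^{X_i^k}_{l_i}$. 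Since $d_k$ fixes a unique value of $X$ for each setting pair $x_1y_1$, evaluating $\sum_{\vec l}\big(\prod_i\mu^{X_i^k}_{l_i}\big)E_{k,\vec l}$ on any argument collapses every summation over the $l_i$ save the relevant one (each of the others contributing $\sum_l\mu_l^X=1$), and returns exactly $d_k(a_1b_1|x_1y_1)\,p(a_2b_2|X;x_2y_2)$. Summing over $k$ with weights $q_k$ reproduces $p$, while the total weight is $\sum_k q_k\prod_i(\sum_{l_i}\mu^{X_i^k}_{l_i})=1$; hence $p$ is a convex combination of extreme points of $Q_{2,2}^{2,2}$.

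I expect this last step to be the only real obstacle: because the post-measurement statistics in a given branch are permitted to depend on the \emph{entire} first-round record $X$, the local decompositions of the different conditionals must be ``tensored'' into one global convex combination over the extreme points of Lemma~\ref{lemma_polytope}, rather than naively multiplied. Everything else is bookkeeping together with the standard CHSH description of the $2$--$2$--$2$ local polytope.
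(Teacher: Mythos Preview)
Your proposal is correct and follows essentially the same line as the paper's proof: both directions amount to checking that membership in $Q_{2,2}^{2,2}$ is equivalent to the factors $p(a_1b_1|x_1y_1)$ and $p(a_2b_2|X;x_2y_2)$ each lying in the non-sequential local polytope, after which the known CHSH facet description is invoked. The only difference is one of detail: the paper asserts the ``if'' direction as straightforward (implicitly leaning on the analogous construction already carried out in the proof of Lemma~\ref{lemma_polytope}), whereas you spell out explicitly how to tensor the branch-wise decompositions $\{\mu_l^X\}$ over the four values of $X$ compatible with each deterministic $d_k$ into a single global convex combination. Your version is in fact cleaner on this point than the paper's own notation in Lemma~\ref{lemma_polytope}, where the combination $\sum q_i\tilde q_{i,X}$ leaves the collapsing of the irrelevant $X$-indices implicit.
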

Recall first that in the corresponding standard Bell scenario the only non-trivial facet inequalities of the local polytope are the CHSH inequalities (with all possible relabelings) \cite{allfacet, allfacets}. 
Then the above observation can be straightforwardly verified by noting that any correlation in $Q_{2,2}^{2,2}$ has to fulfill these inequalities (by being a product of local correlations) and if it fulfills them, then it is in the polytope (which together proves the statement).

Whereas we considered here the simplest sequential scenario, it should be noted that the characterization of the local polytope can be easily generalized to an arbitrary length of the sequence and number of settings and outcomes, even allowing for a different number of settings or outcomes at different time steps. In order to do so note first  that it has been shown in \cite{TCqubit} that the correlations factorize for an arbitrary length of the sequence into correlations of a single time step.
The local polytope can then always be characterized by all the facet inequalities of the local polytopes of the correlations for a fixed time step given previous outcomes and settings. This follows from the argumentation  in the proof of Observation \ref{ineqLocal}. Moreover, note that  the result in \cite{TCqubit}  further implies that the correlations considered here can be written as $
p(a_1a_2\ldots a_Lb_1b_2\ldots b_L\vert x_1x_2\ldots x_Ly_1y_2\ldots y_L)=p(a_1\ldots a_{L-1}b_1\ldots b_{L-1}\vert x_1\ldots x_{L-1}y_1\ldots y_{L-1})p(a_Lb_L|X^{(L)};x_L y_L)
$
where  here and in the following $X^{(L)}=a_1\ldots a_{L-1}b_1\ldots b_{L-1}x_1\ldots x_{L-1}y_1\ldots y_{L-1}$, $p(a_1a_2\ldots a_{L-1}b_1b_2\ldots b_{L-1}\vert x_1x_2\ldots x_{L-1}y_1y_2\ldots y_{L-1})\in P_{N,L-1}^{O,S}$ and $p(a_Lb_L|X^{(L)};x_L y_L)$ fulfills the no-signaling conditions. Using an analogous argumentation as in Lemma \ref{lemma_polytope} one can straightforwardly show that the extreme points are those correlations for which $p(a_1a_2\ldots a_{L-1}b_1b_2\ldots b_{L-1}\vert x_1x_2\ldots x_{L-1}y_1y_2\ldots y_{L-1})$ and $p(a_Lb_L|X^{(L)};x_L y_L)$ are extreme points of their respective polytopes (with possibly different ones for different $X^{(L)}$). Hence, by induction one obtains that also Lemma \ref{lemma_polytope}  can be generalized to arbitarily long sequences.

{\it A device-dependent Schmidt number witness.---}
It is obvious that the quantum bounds for each non-trivial inequality characterizing the local polytope $Q_{2,2}^{2,2}$ are given by the quantum bound for the CHSH inequality, i.e. $2 \sqrt{2}$, and also the (post-measurement) state and measurements attaining the bound are the ones known from the non-sequential case \cite{Tsirelsonbound}. As the  inequalities only depend either on the initial state or one of the post-measurement states, it is meaningful to consider also a concatenated version of the CHSH inequality (which involves sequential correlations) given by 
\begin{align}
\mathcal{B}=\sum_{x_1,y_1=0,1}\sum_{a_1=\pm 1} [p(a_1b_1(a_1,x_1,y_1)\vert x_1 y_1)\sum_{x_2,y_2=0,1}\sum_{a_2,b_2=\pm 1} (-1)^{x_2y_2} a_2 b_2p(a_2b_2|X;x_2 y_2)],\label{concchsh}\end{align}
where here and in the following $b_1(a_1,x_1,y_1)=(-1)^{x_1y_1} a_1$.
Note that we used in the first time step a less common way of writing the CHSH inequality which is related to the standard notion of the CHSH inequality via 
\begin{align}\sum_{x,y=0,1}\sum_{a=\pm 1} p(a [a(-1)^{xy}] \vert x y)=\frac{1}{2}(4+\sum_{x,y=0,1}\sum_{a,b=\pm 1} (-1)^{xy} a b p(ab|xy))\label{CHSHresc}\end{align}
 and therefore using the bounds for the standard CHSH inequality \cite{CHSH, Tsirelsonbound} one obtains that the local bound of the rescaled version of the CHSH inequality in Eq. (\ref{CHSHresc}) is $3$ and the quantum bound is $\sqrt{2}+2$. Obviously, the protocol reaching the quantum bound is the same as for the standard notion of the inequality.

With this the quantum bound of the concatenated CHSH inequality in Eq. (\ref{concchsh}) is $(\sqrt{2}+2)2\sqrt{2}$. This can be shown by using the bounds on the quantities for a single time step (see  \cite{exTC} for an analogous construction). Using that for fixed $X$ the quantum bound of  $\sum_{x_2,y_2=0,1}\sum_{a_2,b_2=\pm 1} (-1)^{x_2y_2} a_2 b_2p(a_2b_2|X;x_2 y_2)$ is $2\sqrt{2}$ and $p(a_1b_1(a_1,x_1,y_1)\vert x_1 y_1)\geq 0$, one obtains 
\begin{align}\nonumber
\mathcal{B}&\underset{\textrm{quantum}}{\leq}2 \sqrt{2}\sum_{x_1,y_1=0,1}\sum_{a_1=\pm 1} p(a_1b_1(a_1,x_1,y_1)\vert x_1 y_1)\\\label{concchshbound}&\underset{\textrm{quantum}}{\leq}2\sqrt{2} (\sqrt{2}+2),\end{align}
where the last inequality follows from the quantum bound of Eq. (\ref{CHSHresc}). In order to reach the maximum one requires no longer a qubit for the maximal violation but a ququart maximally entangled state, where the first measurements are done on one qubit subspace, and the measurements in the second time step on the other. Note that this implies that this quantity cannot be used to detect genuine multi-level entanglement. It is clear that with a qubit the quantum bound cannot be achieved. In order to achieve the maximal violation of the CHSH inequality in the first time step projective measurements are needed, which lead to a separable state to begin with in the second time step, which in turn cannot exceed the local bound. Other strategies might be more beneficial but do not allow to achieve the overall quantum bound.

If one restricts to non-trivial projective measurements, i.e., projective measurements that do not output one measurement outcome with certainty irrespective of the state that is measured, the bound for a qubit can be analogously evaluated from the bounds for the quantities for a single time step. As for qubits the post-measurement state after non-trivial projective measurements is separable, the value of the CHSH inequality in the second time step (for given $a, b, x$ and $y$) cannot exceed the local bound of $2$. Hence, the expression in Eq. (\ref{concchsh}) is bounded by two times the rescaled CHSH inequality for the first time step which results in a bound of $2(\sqrt{2}+2)$ for qubits. In comparison, the bound for ququarts (which is in this case equivalent to the quantum bound) is given by $(\sqrt{2}+2)2\sqrt{2}$. Note that therefore this concatenated CHSH can be employed to witness the Schmidt number \cite{Schmidtnumber} with a  considerable gap between qubits and ququarts. It is device-dependent as we assumed non-trivial projective measurements, however, these do not need to be characterized. Using the CGLMP inequality \cite{CGLMP} a Schmidt number witness has been proposed  in the standard Bell scenario \cite{dimwit}. This witness is device-independent but it requires three-outcome measurements and the gap is small compared to the one proposed here and hence hard to access experimentally. In the considered device-dependent scenario (which requires in this case characterized states) the gap for the nonsequential CGLMP inequality is still small \cite{DDdimwit}, whereas the gap obtained here is straightforwardly accessible in an experiment and requires fewer assumptions. Further, a device-independent dimension test based on the CGLMP inequality (with 4 outcomes) has been implemented experimentally \cite{CGLMPexp} certifying a local dimension of larger than 3. However, also for this test the gap is considerably smaller. It should also be noted that it has been shown recently that there exist states for which the Schmidt number cannot be certified in a device-independent way \cite{SNnotDI} (even in the sequential many-copy case) and hence in general assumptions on the measurements may be necessary anyway.

In order to generalize the Schmidt number witness to longer sequences  (see also \cite{exTC}) one can straightforwardly extend the construction to sequences of arbitrary length $L$ by choosing the rescaled version for the first $L-1$ time steps and the standard notion for the CHSH inequality in the last time step and concatenating them analogously to the one for two time steps leading to a bound of $(\sqrt{2}+2)3^{L-2}2$ for qubits and non-trivial projective measurements. More precisely, one uses as before that in the last $L-1$ time steps the measurements are performed on a separable states and therefore one cannot exceed the local bounds and it is optimal to perform the measurements and use the initial state realizing the quantum bound in the first time step. Note that the quantum bound scales as $(\sqrt{2}+2)^{L-1}2\sqrt{2}$. In order to illustrate this procedure let us consider the case $L=3$. Then the Schmidt number witness is of the form 
\begin{align}\nonumber
\mathcal{B}_{L=3}&=\sum_{x_1,y_1=0,1}\sum_{a_1=\pm 1} \{p(a_1b_1(a_1,x_1,y_1)\vert x_1 y_1)[\sum_{x_2,y_2=0,1}\sum_{a_2=\pm 1} p(a_2b_2(a_2,x_2,y_2)\vert X; x_2 y_2)\\&\sum_{x_3,y_3=0,1}\sum_{a_3,b_3=\pm 1} (-1)^{x_3y_3} a_3 b_3p(a_3b_3|X^{(2)};x_3 y_3)]\}\label{concchsh_L3}\end{align}
and analogously to before the quantum bound is given by 
\begin{align}\nonumber
\mathcal{B}_{L=3}&\underset{\textrm{quantum}}{\leq}2 \sqrt{2}\sum_{x_1,y_1=0,1}\sum_{a_1=\pm 1} p(a_1b_1(a_1,x_1,y_1)\vert x_1 y_1)[\sum_{x_2,y_2=0,1}\sum_{a_2=\pm 1} p(a_2b_2(a_2,x_2,y_2)\vert X; x_2 y_2)]\\
&\underset{\textrm{quantum}}{\leq}\label{concchshbound}2\sqrt{2} (\sqrt{2}+2)\sum_{a_1=\pm 1} p(a_1b_1(a_1,x_1,y_1)\vert x_1 y_1)\underset{\textrm{quantum}}{\leq}2\sqrt{2} (\sqrt{2}+2)^2,\end{align}
where we used here in the first line that for each $X^{(2)}$ the bound of the standard CHSH inequalities is $\sum_{x_3,y_3=0,1}\sum_{a_3,b_3=\pm 1} (-1)^{x_3y_3} a_3 b_3p(a_3b_3|X^{(2)};x_3 y_3)]\leq 2\sqrt{2}$ (and that probabilities are larger than zero) and then twice the quantum bound of Eq. (\ref{CHSHresc}) (together with the positivity of probabilities).
The bound for qubits and non-trivial projective measurements can also be obtained similarly 
\begin{align}\nonumber
\mathcal{B}_{L=3}&\underset{\textrm{d=2 \& proj}}{\leq}2 \sum_{x_1,y_1=0,1}\sum_{a_1=\pm 1} p(a_1b_1(a_1,x_1,y_1)\vert x_1 y_1)[\sum_{x_2,y_2=0,1}\sum_{a_2=\pm 1} p(a_2b_2(a_2,x_2,y_2)\vert X; x_2 y_2)]\\
&\underset{\textrm{d=2 \& proj}}{\leq}\label{concchshbound}6\sum_{a_1=\pm 1} p(a_1b_1(a_1,x_1,y_1)\vert x_1 y_1)\underset{\textrm{d=2 \& proj}}{\leq}6(\sqrt{2}+2).\end{align}
Here we used that after the first measurement all states are separable and hence the measurement statistics of the second and third time step, $ p(a_2b_2(a_2,x_2,y_2)\vert X; x_2 y_2)$ and $p(a_3b_3|X^{(2)};x_3 y_3)$, cannot exceed the local bounds of the respective inequalities. More precisely, we used first together with the positivity of the probabilities that for any $X^{(2)}$ the local bound for the standard CHSH inequality implies that $\sum_{x_3,y_3=0,1}\sum_{a_3,b_3=\pm 1} (-1)^{x_3y_3} a_3 b_3p(a_3b_3|X^{(2)};x_3 y_3)]\leq 2$, then for any $X$ that the local bound is $\sum_{x_2,y_2=0,1}\sum_{a_2=\pm 1} p(a_2b_2(a_2,x_2,y_2)\vert X; x_2 y_2)\leq 3$ and then finally that the maximal attainable value of Eq. (\ref{CHSHresc}) with qubits and projective measurements is $\sqrt{2}+2$.
The bounds for sequences of arbitrary length $L$ can be obtained analogously.

{\it Comparison with the local models of \cite{Gallego2014} and discussion.---}
In \cite{Gallego2014} two different local models have been discussed. One of them is based on a resource theoretic approach called operationally local correlations and the other one is a straightforward generalization of the non-sequential case called time-ordered local model. For two time steps the latter one is given by 

\begin{equation}\label{TOLoc}
p(a_1a_2b_1b_2\vert x_1x_2y_1y_2)=\sum q_\lambda p^\lambda(a_1b_1|x_1 y_1)p^\lambda(a_2b_2|x_2 y_2)
\end{equation}
and is a subset of the operationally local correlations. First it should be noted that both models respect no-signaling among the parties throughout the whole sequence as this is the scenario the authors were interested in. This is in strong contrast to the scenario we consider here. However, in the case of a sequence of length two and a single choice of setting in the first time step for both parties the authors prove that showing no hidden nonlocality and having a time-ordered local model is equivalent. Hence, despite the fact that the constraints are different for this simple scenario our sequential local polytope and the time-ordered model coincide. However, in general this is not the case. The dependence on previous outcomes and settings does not make it possible to write any correlations in $Q_{N,L}^{O,S}$ in the form of Eq. (\ref{TOLoc}) as there exist correlations that do not satisfy the no-signaling correlations throughout all time steps to begin with. Note that the latter also holds true for the operationally local correlations (for the same reason). 

Interestingly, any time-ordered local model is within $Q_{N,L}^{O,S}$. This can be seen by using that it has been shown \cite{Gallego2014} that any post-selection of a time-ordered local model admits a time-ordered local model. Hence, any probability distribution for a given history is local in the standard sense which implies that it lies within $Q_{N,L}^{O,S}$. Note that this implies that any entangled  state that shows no hidden nonlocality but is still nonlocal on the basis of time-ordered local models allows for some measurement statistics that lies outside of the time-ordered model but all its correlations are still inside of $Q_{N,L}^{O,S}$.

Observation \ref{ineqLocal} implies that there exist entangled states for which all correlations lie within the  local polytope $Q_{2,2}^{2,2}$. These are precisely those that are local in the standard Bell scenario and also do not display hidden nonlocality (see \cite{state} for an example), i.e. also post-selection on the first time step does not lead to non-local correlations. Note further that the observation that the correlations of states with no standard nonlocality and no hidden nonlocality lie within the local polytope can be generalized to arbitrary number of settings and outcomes and length of the sequence. Hence, in order to potentially observe sequential nonlocality beyond hidden nonlocality one has to impose no-signaling between the parties for the whole duration of the sequence (or at least for several time steps) as has been done in \cite{Gallego2014}. However,  this may not be naturally the case but has to be enforced in an experimental set-up and restricts the period of time between time steps and possibly also the length of the performed sequences.

{\it Conclusion and Outlook.---}
In this work we studied the correlations arising from sequential local measurements. We first characterized the polytope defined by the AoT constraints and the no-signaling constraints for measurements within the same time step. We then put forward a notion of locality which incorporates the fact that signaling is allowed from earlier to subsequent time steps. We characterized the local polytope in the bipartite scenario with two settings and two outcomes each  by providing a set of inequalities and proposed a device-dependent Schmidt number witness. Finally, we compared our model with the different local models in \cite{Gallego2014} where no-signaling is imposed throughout the whole sequence. Our results show that signaling among subsequent time steps prohibits a system to show nonlocality beyond hidden nonlocality. That is, in this scenario longer sequences do not provide an advantage compared to local filtering  which implies that there do exist entangled states which need to be considered local irrespective of the length of the measurement sequence. 

Whether such states exist if one restricts signaling among the time steps is one of the main open questions. In this context it would be also relevant to study scenarios where the period of time on which previous measurement of Bob can influence Alice's measurements corresponds to several time steps (in order to meet on the one hand the experimental reality and on the other hand still allow for a potential advantage of long sequences). 

If one does not restrict the dimension of the system the quantum set of correlations in the considered sequential scenario can be deduced from the corresponding one for the single time step. This, however, does not hold true if one is interested in the set of correlations which can be obtained from  quantum systems of bounded dimension. It would be interesting to extend the here proposed Schmidt number witness also to the device-independent setting. Moreover, one may also use the dimension-dependence in order to derive semi-device independent self-testing protocols which go beyond the corresponding quantities for a single time step.

\begin{acknowledgements}
I thank Marcus Huber and Otfried G\"uhne for  discussions and Philip Taranto for discussions and comments on the manuscript.
This work has been supported by the Austrian Academy of Sciences and by the Austrian Science Fund (FWF): J 4258-N27 (Erwin-Schr\"odinger Programm) and Y879-N27 (START project).
\end{acknowledgements}


\begin{thebibliography}{99}

\bibitem{Bell} J.  Bell,  Physics \textbf{1}, 195 (1964).

\bibitem{randomness} R. Colbeck, \textit{Quantum and relativistic protocols for secure multi-party computation}, Ph.D. thesis, University of
Cambridge (2007).

\bibitem{randomnessgen}S. Pironio, A. Acin, S. Massar, A. B. de la Giroday, D. Matsukevich, P. Maunz, S. Olmschenk, D. Hayes, L. Luo, T. Manning, and C. Monroe,  Nature \textbf{464}, 1021 (2010).

\bibitem{E91} A.  Ekert, Phys. Rev. Lett. \textbf{67}, 661 (1991).

\bibitem{DIQKD} J. Barrett, L. Hardy, and A. Kent, Phys. Rev. Lett. \textbf{95}, 010503 (2005).

\bibitem{selftesting0} D. Mayers and A. Yao,  Quant. Inf. and Comp. \textbf{4}, 273 (2004).

\bibitem{selftesting} A. Coladangelo, K. T. Goh, and V. Scarani, Nature Communications \textbf{8}, 15485 (2017).

\bibitem{rev} See e.g. N. Brunner, D. Cavalcanti, S. Pironio, V. Scarani, and S. Wehner, Rev. Mod. Phys. \textbf{86}, 419 (2014) and references therein.

\bibitem{PR} S. Popescu and D. Rohrlich, Found. Phys. \textbf{24}, 379 (1994).

\bibitem{PR2} P. Rastall, Found. Phys. \textbf{15}, 963 (1985).

\bibitem{PR3} L. A. Khalfin, B. S. Tsirelson, in: \textit{Symposium on the Foundations of Modern Physics}, edited by P. Lahti and P. Mittelstaedt (World Scientific, Singapore, 1985), pp. 441460.


\bibitem{allfacet} M. Froissard, Nuovo Cimento B \textbf{64}, 241 (1981).


\bibitem{allfacets} A. Fine, Phys. Rev. Lett. \textbf{48}, 291 (1982).

\bibitem{CHSH} J. F. Clauser, M. A. Horne, A. Shimony, R. Holt, Phys.
Rev. Lett. \textbf{23}, 880 (1969).

\bibitem{Werner} R.F. Werner, Phys. Rev. A \textbf{40}, 4277 (1989).

\bibitem{forPOVM}  J. Barrett, Phys. Rev A \textbf{65}, 042302 (2002).

\bibitem{hiddennonlocal0}  S. Popescu, Phys. Rev. Lett.\textbf{74}, 2619 (1995).


\bibitem{hiddennonlocal} F. Hirsch,  M. T. Quintino, J. Bowles, and N. Brunner, Phys. Rev. Lett. \textbf{111}, 160402 (2013).


\bibitem{multipleB0} R. Silva, N. Gisin, Y. Guryanova, and S. Popescu, Phys. Rev. Lett. \textbf{114}, 250401 (2015).

\bibitem{multipleB} P. J. Brown and R. Colbeck, Phys. Rev. Lett. \textbf{125}, 090401 (2020).

\bibitem{Gallego2014} R. Gallego, L. E. W\"urflinger, R. Chaves, A. Acin, and M. Navascu\'es, New J. Phys. \textbf{16},  033037 (2014).

\bibitem{NPA} M. Navascu\'es, S. Pironio, and A. Acin, Phys. Rev. Lett. \textbf{98}, 010401 (2007);
M. Navascu\'es, S. Pironio, and A. Acin, New J. Phys. \textbf{10}, 073013 (2008).

\bibitem{seqNPA} J. Bowles, F. Baccari, and A. Salavrakos, arXiv:1911.11056 [quant-ph].

\bibitem{seqrand}F. J. Curchod, M. Johansson, R. Augusiak, M. J. Hoban, P. Wittek, and A. Acin, Phys. Rev. A  \textbf{95}, 020102(R) (2017).

\bibitem{multilevel} T. Kraft, C. Ritz, N. Brunner, M. Huber, and O. G\"uhne, Phys. Rev. Lett. \textbf{120}, 060502 (2018)

\bibitem{dim_test_multilevel} W. Cong, Y. Cai, J.-D. Bancal, and V. Scarani, Phys. Rev. Lett. \textbf{119}, 080401 (2017).


\bibitem{state} F. Hirsch, M.T\'ulio Quintino, J. Bowles, T. V\'ertesi, and N. Brunner, New J. Phys. \textbf{18}, 113019 (2016).

\bibitem{AoT}  L. Clemente and J. Kofler, Phys. Rev. Lett. \textbf{116}, 150401
(2016).

\bibitem{NS_seq1} B. Salwey and S. Wolf,  2016 IEEE International Symposium on Information Theory (ISIT), 2254 (2016).

\bibitem{NS_seq2} Esther H\"anggi, Renato Renner, and Stefan Wolf, Theoretical Computer Science \textbf{486}, 27  (2013).

\bibitem{NS_seq3} B. Bourdoncle, Stefano Pironio, and Antonio Acin, Phys. Rev. A \textbf{98}, 042130 (2018).


\bibitem{Tsirelsonbound} B. S. Cirel’son, Lett. Math. Phys. \textbf{4}, 93 (1980).


\bibitem{TCqubit} J. Hoffmann, C. Spee, O. Gühne, and C. Budroni, New J. Phys. \textbf{20}, 102001 (2018).


\bibitem{exTC} C. Spee, C. Budroni, and O. G\"uhne, New J. Phys. \textbf{22}, 103037 (2020).

\bibitem{Schmidtnumber}  B. M. Terhal, and  P. Horodecki, Phys. Rev. A \textbf{61}, 040301(R) (2000).

\bibitem{CGLMP} D. Collins, N. Gisin, N. Linden, S. Massar, and S. Popescu,
Phys. Rev. Lett. \textbf{88}, 040404 (2002).

\bibitem{dimwit} N.Brunner, S. Pironio, A. Acin, N. Gisin, A .A. M\'ethot, and V. Scarani
Phys. Rev. Lett. \textbf{100}, 210503 (2008).

\bibitem{DDdimwit} A. C. Dada, J. Leach, G. S. Buller, M. J. Padgett, and E. Andersson, Nature Physics \textbf{7}, 677 (2011).

\bibitem{CGLMPexp} Y. Cai, J.-D. Bancal, and J. Romero,  J. Phys. A: Math. Theor. \textbf{49}, 305301 (2016).


\bibitem{SNnotDI} F. Hirsch, and M. Huber, arXiv:2003.14189 [quant-ph].


\end{thebibliography}
\end{document}